\title{BSP vs MapReduce\thanks{Appeared at ICCS 2012}}
\author{Matthew Felice Pace\thanks{Research supported by the Centre for Discrete Mathematics and its Applications (DIMAP), University of Warwick, EPSRC award EP/D063191/1}}
\institute{DIMAP and Department of Computer Science\\
University of Warwick\\
Coventry, CV4 7AL, UK\\
\email{matthewfp@dcs.warwick.ac.uk}}
\begin{document}
\maketitle

\begin{abstract}
The MapReduce framework has been generating a lot of interest in a wide range of areas. It has been widely adopted in industry and has been used to solve a number of non-trivial problems in academia. Putting MapReduce on strong theoretical foundations is crucial in understanding its capabilities. This work links MapReduce to the BSP model of computation, underlining the relevance of BSP to modern parallel algorithm design and defining a subclass of BSP algorithms that can be efficiently implemented in MapReduce.
\keywords{MapReduce, BSP, Parallel Algorithms}
\end{abstract}

\section{Introduction}
Efficient algorithms are of fundamental importance in a world driven by computation. The amount of data in today's world is constantly increasing at a dramatic rate, and this poses a great challenge to algorithm design. The web graph today is much larger than could have probably been imagined 20 or even 10 years ago. Other examples of such large datasets abound, from search logs to financial data.

\paragraph{}
One framework of processing massive datasets, called MapReduce, has been attracting great interest. It was developed and widely used by Google \cite{MapReduceOrig}, while its open source implementation Hadoop \cite{Hadoop} is currently used by more than 100 companies worldwide including eBay, IBM, Yahoo!, Facebook, and Twitter, along with a number of universities \cite{HadoopUsers}. Some companies, such as Amazon and Microsoft, allow users to run MapReduce programs on their cloud \cite{AmazonElastic,Daytona}. MapReduce algorithms have been used to solve a number of non-trivial problems in diverse areas including data processing, data mining and graph analysis (see \cite{MapReduceAlg1,MapReduceAlg2} for an introductory list of works).

\paragraph{}
The popularity of MapReduce means that it now plays a prominent role in the field of parallel computation. This creates the demand to put MapReduce on sound theoretical foundations, and to establish its relationship to other major parallel computation models such as BSP and PRAM. In this regard, major work has already been done by Feldman et al. \cite{Feldman}, Karloff et al. \cite{Karloff} and Goodrich et al. \cite{Goodrich}. Between them, these authors have shown that MapReduce can be used to simulate the PRAM and BSP model, as well as link MapReduce to a subclass of streaming algorithms. 

\paragraph{}
This work aims to further analyse the connection between MapReduce and BSP. Section 2 introduces the MapReduce framework, followed by an introduction to the BSP model in section 3. In section 4, a model for MapReduce is developed and the links between this model and BSP are discussed. A simulation of MapReduce on BSP is given in section 5, while section 6 defines a subclass of algorithms that can be efficiently implemented in MapReduce along with a few examples. The final section provides some concluding views.

\section{MapReduce framework}
Parallel programming is, unfortunately, not straightforward. The developer needs to cater for a number of non-trivial problems such as fault tolerance, load balancing and synchronisation. In their seminal 2004 paper, Dean and Ghemawat \cite{MapReduceOrig} introduced the parallel computation framework MapReduce. This framework allows for simplified programming on large clusters of low-end systems. The simplicity of MapReduce is due to the fact that it acts as an abstraction on top of the complex details that need to be catered for when writing parallel code. This allows the programmer to focus on the functions that actually manipulate the data at hand.

\paragraph{}
The main idea for MapReduce comes from functional programming languages such as Lisp \cite{Lisp}. The Lisp function Map takes as input a function and a sequence, and applies this function to all the elements in the sequence. The function Reduce, given a sequence of values and a binary function, uses this function to combine the elements of the sequence into a single output value. These two functions, performed in rounds, form the basis of MapReduce.

\paragraph{}
MapReduce was originally designed to run on large clusters of low-end commodity machines. Every machine has a processor, a fast primary memory and a slower secondary memory, and is connected via an underlying network to the rest of the cluster. The secondary memory is used as part of a global shared memory, with each machine allowed to access other machines' secondary memory remotely, although only during synchronisation. The framework works on data in the form of $\langle key;value \rangle$ pairs, with $n$ initial such pairs stored in global memory as the algorithm's input. A MapReduce algorithm proceeds in rounds, with two phases in every round, a map and a reduce phase. Each phase is composed of an input, a computation and an output step, with the output of each phase used as input to the next phase. When a phase is finished, i.e. when every machine has written its output data to shared memory, the data is synchronised, i.e. each machine is allowed to read the data written in the previous phase. No other communication is allowed between machines, except with the master processor as described later. The local primary memory is cleared before each synchronisation. Parallelism is achieved by having different machines perform the same functions on different data.

\paragraph{}
System failures are common in clusters of hundreds or thousands of low-end systems, and therefore automatic fault tolerance and load balancing play a crucial role in the design of the MapReduce framework. These are achieved by having each machine work on multiple tasks, making it easier to reprocess and reassign these tasks in case of machine failure. One system processor is assigned the role of master processor and controls how these tasks are assigned across the other worker processors. Each task reads its input and processes it using either a map or reduce function designed by the developer. Each function acts on a single $\langle key;value \rangle$ pair, with tasks computing a function for every input pair assigned to them. MapReduce was initially designed to cater for algorithms where the output was much smaller than the input, so the number of map tasks $q$ was much larger than the number of reduce tasks $r$. Typically, in a system with $p$ processors, $q$ was between 10 and 100 times $p$ and $r$ was around 2 to 5 times $p$, as described in the original paper \cite{MapReduceOrig} and the official Hadoop tutorial \cite{HadoopTutorial}. However, subsequently algorithms have been designed that use the same number of map and reduce tasks.

\paragraph{}
The first round of MapReduce proceeds as follows. The algorithm's input is placed in global memory, split into $q$ parts. Each part will be processed by a separate map task, one pair at a time. The master then assigns map tasks to workers. The number of map tasks is typically greater than the number of processors, so one task is initially assigned to each worker processor. When a processor finishes computing its task a new map task is assigned to it. Upon being assigned a new task, a processor reads the input data pertaining to the task from global memory to primary memory and processes it. The output of each map task, in the form of $\langle key;value \rangle$ pairs, is stored in the worker's secondary memory.

\paragraph{}
While in the map phase, every map task processes various $\langle key;value \rangle$ pairs, in the reduce phase all the values for a given key are processed by a single reduce task. This is achieved by logically partitioning the secondary memory of each worker processing a map task into $r$ partitions, and then determining in which particular partition an output pair should be stored, a process accomplished by the shuffle step. This step can be viewed as a data routing step, determining which reduce task will process a data pair based on the pair's key. This function is performed by the workers while processing the map tasks. Typically, a function such as $(hash(key)\bmod r)$ is used, where the $hash$ function is a simple function, computable in a small constant time, used to map the keys to a more manageable domain. Other partitioning functions can be defined by the user, especially if the keys are in numeric form, such as partitioning the keys into $r$ logical partitions representing various ranges of values.

\paragraph{}
When all the map tasks have finished, the $r$ reduce tasks are assigned to the available workers using the same process as for the map tasks. Each reduce task accesses the data assigned to it, stored across the workers responsible for computing the $q$ map tasks. All the pairs with the same key are stored in the same partition, and each partition can have pairs with different keys. All this data is sorted by key and combined such that all values associated with a key are grouped together in a single $\langle key;value \rangle$ pair. This is sometimes considered as being a second part of the shuffle step.

\paragraph{}
Each reduce task then reads its assigned data and processes it one $\langle key;value \rangle$ pair at a time using the reduce function. The task's output is then written to global memory, and can either be the final output of the algorithm or used as input to a new round of MapReduce. The input for a new round of MapReduce is partitioned into $q$ parts by the master processor and the process just described is repeated.

\paragraph{}
The relationship between the system processors and the map and reduce tasks leads to some interesting aspects of the MapReduce framework. A number of map and reduce tasks can be performed, in sequence, by a single processor in each round. In each of the map and reduce phases, tasks are assigned to workers as these finish their previously assigned task. Therefore, if computation is equally divided between tasks, then every processor will perform about $\sfrac{q}{p}$ map tasks and $\sfrac{r}{p}$ reduce tasks. If on the other hand the computation time differs for each task, then load balancing is automatically achieved. It also allows for the efficient handling of fault tolerance. However, the task assignment strategy also places some limitations on the framework. Between rounds the data is split up into $q$ parts and each is assigned to a map function. After any task finishes, the worker's primary memory is cleared, so data cannot be associated with a single processor and accessed at will in different rounds. Therefore, any data that is required in multiple rounds should be specifically stored in global memory.

\paragraph{}
Given a multiset of $n$ $\langle key ; value \rangle$ pairs as an input, the above process describing a single MapReduce round is defined by two functions: map and reduce, and the shuffle step. These are defined as follows:

\begin{itemize}
	\item Given a single input pair from the multiset of the round's input pairs $\{\langle k_1 ; v_1 \rangle, \langle k_2 ; v_2 \rangle, \ldots, \langle k_n ; v_n \rangle\}$, the \textit{map} function performs some computation to produce a new intermediate multiset of $\langle key; value \rangle$ pairs $\{\langle l_1 ; w_1 \rangle, \langle l_2 ; w_2 \rangle, \ldots, \langle l_m ; w_m \rangle\}$.

	\item The union of all the intermediate multisets produced by the \textit{map} functions is acted upon in the \textit{shuffle} step. All the pairs with the same key $l_i$ are combined to produce a new set of lists of the form $\langle l_i;w_1,w_2, \ldots \rangle$.

	\item Each of the lists produced by the shuffle step is passed to a separate \textit{reduce} function that performs some computation to produce a new list $\langle j_i; x_1, x_2, \ldots \rangle$.
\end{itemize}

\section{The BSP Model}
The traditional von Neumann model has served, for a long time, as the main model for designing and reasoning about sequential algorithms. It has also served as a reference model for hardware design. In the context of parallel algorithm design, no such ubiquitous model exists. One of the earliest attempts at defining such a model was the PRAM \cite{JaJa}. While it allowed for better theoretical reasoning about parallel algorithms, PRAM made a number of assumptions that cannot be fulfilled in hardware; mainly because the cost of communication is greater than that of computation and the number of processors is limited.

\paragraph{}
Valiant \cite{Valiant} introduced the BSP model to better reflect the hardware design features of mainstream parallel computers, through the direct mode of BSP (assumed in this paper). The BSP model allows for efficient parallel algorithm design without any overspecification requiring the use of a large number of parameters. The underlying parallel computer implementation is similarly not overspecified. A BSP computer can be defined by $p$ processors, each with its local memory, connected via some means of point-to-point communication. BSP algorithms proceed in supersteps in each of which processors receive input at the beginning, perform some computation asynchronously, and communicate any output at the end. Barrier synchronisation is used at the end of every superstep to synchronise all the $p$ processors in the system. Each processor can communicate directly with every other processor, providing complete control over how the data is distributed between the processors in every superstep.

\paragraph{}
A BSP system is also defined by its bandwidth inefficiency $g$. Every $l$ time steps an attempt is made to synchronise the processors. If they have finished their supersteps then the processors are synchronised, otherwise they keep on working for another $l$ time units.

\paragraph{}
An algorithm designed in BSP can be measured by three main features: the computation time and communication cost for each superstep, and the number of supersteps. Let $w_s$ be the maximum number of arithmetic operations performed by each of the $p$ processors in a superstep $s$. Let $h'_s$ be the maximum input, and $h''_s$ the maximum output data units over all $p$ processors in superstep $s$. The total communication cost of superstep $s$ is $h_s = h'_s + h''_s$. An algorithm running in $S$ supersteps, therefore, has costs $W = \sum_{s=1}^{S}{w_s}$ and $H = \sum_{s=1}^{S}{h_s}$. An estimate running time on any physical system is defined as $W + H \cdot g + S \cdot l$.

\paragraph{}
The synchronisation periodicity $l$ , is typically much higher than the bandwidth inefficiency $g$. Therefore, the number of supersteps required by an algorithm should be minimised as much as possible, while aiming to achieve optimal computation and communication costs. Balanced algorithms are achieved by dividing computation and communication equally amongst the available processors. Since $g$ and $l$ vary between system implementations, algorithm design revolves around improving $W$, $H$ and $S$.
\section{A Model for MapReduce}
The popularity of the MapReduce framework requires that the theoretical limitations of the system be analysed, and theoretical models developed for it. The connection between MapReduce and existent models of parallel algorithm design also needs to be investigated. Pioneering work in the field has already been done, linking MapReduce to other parallel models, as well as to other fields of computer science.

\paragraph{}
The first work \cite{Feldman}, by Feldman et al., proposed a model for the framework aimed at linking it to the data stream model. A subclass of MapReduce algorithms, called mud (massive, unordered and distributed) algorithms, is defined and shown to be closely related to a subclass of streaming algorithms. In fact, mud algorithms can simulate any symmetric (order invariant) streaming algorithms with comparable communication and storage costs. The computation cost is exponential due to the use of Savitch's theorem \cite{Sipser}. Symmetric mud algorithms are limited in space and communication to polylogarithmic cost in terms of the algorithm's input size. The model is very restrictive in the type of algorithms that can be designed, since only one round of MapReduce is allowed and the input is assumed to be independent separate streams.

\paragraph{}
Karloff et al. \cite{Karloff}, propose a model that better captures the specifics of the MapReduce framework. The map and reduce phases are clearly defined, and algorithms can have multiple rounds. A limit is placed on the number of processors in the system, such that the input size $n$ is greater than the number of workers. The primary and secondary memory of each worker is limited to $n^{1-\epsilon}$, for some $\epsilon > 0$, as is the size of the input and output data for each worker in a specific round. The size of global shared memory is in turn bound by the size of the secondary memory of each worker. The authors show that a subclass of EREW PRAM algorithms can be simulated using the model.

\paragraph{}
The most recent model (to the best of our knowledge) is due to Goodrich et al. \cite{Goodrich} which is based on BSP. The size of data that can be sent or received by each reduce task is limited to a value $M$ determined by the algorithm designer. Apart from this restriction, the model differs from BSP in the way the communication cost is calculated. Simulations of CRCW PRAM and BSP algorithms are presented.

\paragraph{}
The idea behind placing limits on I/O size and storage size is to enforce parallelism in the algorithm design, as discussed in \cite{Karloff} and \cite{Goodrich}. If this is not enforced, algorithms could be designed that trivially place all the data on a single processor and run the sequential algorithm. This should not be of any concern if the algorithm designer is aiming to move from sequential to parallel design, and such limitations are not posed in existent models like PRAM and BSP.

\paragraph{}
From the work presented in \cite{Goodrich} and the discussions in the previous two sections it is evident that the relationship between BSP and MapReduce is very strong. Both handle parallel algorithm design in a coarse-grained fashion, interleaving phases of computation and communication. Both can be used to design algorithms running on clusters of low-end systems connected with point-to-point communication, and both make use of synchronisation between rounds/supersteps. The aim of the rest of this work is to further investigate this relationship.

\paragraph{}
Goodrich et al. \cite{Goodrich} showed that all BSP algorithms are implementable in MapReduce by only using the reduce phase, setting the map function to the identity function. However, they did not discuss the efficiency of implementing BSP algorithms on the framework, given its differences to conventional parallel systems. The main difference is that after every map or reduce task finishes, the worker on which they were computed clears the primary local memory. Any data which might need to be used in the following rounds has to be stored in global memory, increasing the communication costs in the process. 

\paragraph{}
Another aspect of MapReduce that differs from BSP is the use of map and reduce tasks running on physical processors. These tasks can be viewed as virtual processors, with multiple virtual processors running on top of physical processors. In the original BSP paper \cite{Valiant}, multiple virtual processors were introduced to design BSP with automatic memory management. Work is divided between the $v$ virtual processors, with each of the $p$ physical processors performing the work of $v/p$ virtual processors. If the work and communication are equally divided between the virtual processors, then work and communication will also be balanced amongst the physical processors. Similarly, if the work is balanced between the map and reduce functions in each MapReduce round, then every processor will process $\sfrac{q}{p}$ map tasks and $\sfrac{r}{p}$ reduce tasks.

\paragraph{}
BSP does not allow any communication between processors in between synchornisation. Similarly, in MapReduce, map and reduce tasks are not allowed to communicate between each other in the same phase. However, asynchronous communication is used between the master processor and the workers to assign these tasks. This allows for dynamic load balancing to be achieved when the individual computation time of the tasks are not balanced. Also, for tasks with unknown exact computation time an offline balanced distribution between processors cannot be found, so dynamic load balancing is used.

\paragraph{}
Finally, while one BSP superstep involves a single computation step, and corresponding input and output communication phases, a round of MapReduce is made up of the map and reduce computation phases with their respective input and output communication. Also, the BSP model assumes that data can be sent directly to any processor in the system, but while data can be directed to specific reduce tasks using specific keys, there is no way of determining which map task will process which part of the input data.

\paragraph{}
Given the similarities between BSP and MapReduce, a model for MapReduce, $MR(p,g,l)$, is proposed, that  is based on BSP. The model has three parameters: the number of physical processors $p$ in the system, the inverse bandwidth $g$ and the system latency $l$ which is the time taken by the shuffle network to set up communication between the map and reduce phases. The number of map tasks $q$ and reduce tasks $r$ to be used are chosen by the algorithm designer, and can vary depending on the algorithm being designed. In specific rounds, a subset of the $r$ reduce tasks can be used instead of all the available tasks. This is not allowed in the case of the map tasks. For the rest of this work, it is assumed that for input size $n$, $n \gg q$ and $n \gg r$, and $q > p$ and $r > p$.

\paragraph{}
Let $q_1, q_2, \ldots$ and $r_1, r_2, \ldots$ denote the specific map and reduce tasks respectively in a single round, and let $R_d$ denote the number of reduce tasks used in a round $d$. The size of data written to or read from shared memory between rounds is denoted by $c_{q_1}, c_{q_2}, \ldots$ for map tasks and $c_{r_1}, c_{r_2}, \ldots$ for reduce tasks. Finally, $t_{q_1}, t_{q_2}, \ldots$ and $t_{r_1}, t_{r_2}, \ldots$ denote the running time of the specific map and reduce tasks respectively.

\paragraph{}
Given the definitions above, for a single round $d$, the computation cost is defined as $T_d = (\sum_{i=1}^{q} t_{q_i}) + (\sum_{i=1}^{R_d} t_{r_i})$. The communication cost for round $d$ is $C_d = (\sum_{i=1}^{q} c_{q_i}) + (\sum_{i=1}^{R_d} c_{r_i})$. A MapReduce algorithm therefore has cost $T + C \cdot g + D \cdot l$, where $T = \sum_{d=1}^{D} T_d$, $C = \sum_{d=1}^{D} C_d$, and $D$ is the number of rounds.

\paragraph{}
It is assumed that since the work done by each task involves at least reading and writing its input and output data from shared memory, then $T \geq C$.

\paragraph{}
The model allows for cost measurements that better reflect the load balancing properties of the MapReduce framework. The MapReduce dynamic load balancing framework leads to a maximum processing time on each processor of $(2 - \sfrac{1}{p}) OPT$, as discussed in \cite{Graham}, where $OPT$ is the optimal maximum processing time on each processor. For balanced algorithms in which the computation and communication costs are equally divided amongst the map and reduce tasks, each machine will handle around $\lceil \sfrac{q}{p} \rceil$ map and $\lceil \sfrac{R_d}{p} \rceil$ reduce tasks in a single round $d$, given that no machines in the cluster fail.

\section{MapReduce on BSP}
The rise in popularity of the MapReduce framework has renewed interest in parallel computing. Still, irrespective of this popularity, it has not been determined whether MapReduce allows for better algorithm design than previously existent models such as BSP, which has been one of the most prominent models for parallel algorithm design for over two decades. It is, therefore, important to ask whether any MapReduce algorithm can be simulated in BSP while preserving its asymptotic costs?

\begin{theorem}
Any round $d$ of an $\mathrm{MR}(p,g,l)$ algorithm with known individual task times and a maximum execution time $t_{max}$ on any processor, can be simulated by a $\mathrm{BSP}(p,g,l)$ machine in $O(1)$ supersteps, such that the maximum execution time on any processor is $O(t_{max})$.
%Any $MR(p,g,l)$ computation with costs $T$, $C$, $F$, $D$, as defined in the previous section, can be simulated by a BSP(v,g,l) machine with costs $W = O(T \cdot \sfrac{p}{v})$, $H = O(C \cdot \sfrac{p}{v})$, $S = O(D)$.
\end{theorem}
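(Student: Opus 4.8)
The plan is to simulate the two phases of a single MapReduce round by a small constant number of BSP supersteps, exploiting two facts: that BSP's direct point-to-point communication is at least as flexible as MapReduce's key-based shuffle, and that the hypothesis of \emph{known} individual task times lets me replace MapReduce's online dynamic load balancing with an equivalent offline static schedule. The central observation is that the $q$ map tasks and $R_d$ reduce tasks are precisely the ``virtual processors'' discussed in the model: I would assign each BSP processor a subset of these virtual tasks and have it execute them sequentially within a superstep.

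First I would handle the map phase in a single superstep. Since the times $t_{q_1}, \ldots, t_{q_q}$ are known, I would apply Graham's greedy list-scheduling --- the very bound $(2 - \sfrac{1}{p}) OPT$ already invoked for the model --- to partition the $q$ map tasks among the $p$ processors so that the maximum computation load is within a constant factor of the optimal map makespan. Each processor reads the input pairs for its assigned tasks, computes the map function on each pair locally, and at the end of the superstep routes every intermediate pair $\langle l_i ; w_i \rangle$ directly to the processor hosting the reduce task responsible for key $l_i$. This routing is exactly the shuffle step; because BSP permits a processor to send to any other processor, it is realised directly as the superstep's output communication, and the grouping of all values sharing a key is completed locally upon receipt.

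Next I would handle the reduce phase in a second superstep. Using the analogous greedy schedule on the known times $t_{r_1}, \ldots, t_{r_{R_d}}$, each processor receives its routed pairs, groups them into the lists $\langle l_i ; w_1, w_2, \ldots \rangle$, applies the reduce function, and writes its output. The per-phase computation load is $O(t_{max})$ because the optimal makespan of each phase is at most $t_{max}$ (the last processor to finish a phase has total round time at least that phase's makespan), so the greedy schedule stays within $(2-\sfrac{1}{p})t_{max}$; summing the two phases keeps the total at $O(t_{max})$. The total communication is bounded by the round's MapReduce cost $C_d$, which by the model's assumption $T \ge C$ is dominated by the computation, so the $g$-weighted communication and the single synchronisation cost $l$ are absorbed into $O(t_{max})$. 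Since only two supersteps are used (at most one more if the input must first be distributed), the round is simulated in $O(1)$ supersteps.

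The hard part will be the cost reconciliation rather than the mechanics of the simulation. Two points need care: (i) justifying that the static Graham schedule on BSP matches MapReduce's dynamic $(2 - \sfrac{1}{p}) OPT$ guarantee, so that the realised $t_{max}$ transfers between models up to a constant --- this is exactly where the ``known individual task times'' hypothesis is essential, since without it an offline balanced distribution need not exist; and (ii) verifying that MapReduce's shuffle communication, which is metered through global-memory writes and reads, is faithfully and no more expensively reproduced by BSP's point-to-point message passing under the same inverse bandwidth $g$, so that the $C_d \cdot g$ term is preserved and the $T \ge C$ assumption continues to dominate it by the computation cost.
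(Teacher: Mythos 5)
Your proposal is correct and follows essentially the same route as the paper's own proof: two BSP supersteps for the map and reduce phases, an offline Graham-style schedule justified by the known task times to match the $(2-\sfrac{1}{p})\,OPT$ makespan bound, the shuffle realised as the first superstep's point-to-point output communication, and a possible extra $O(1)$-cost superstep for redistributing data between rounds. The only difference is presentational --- you make the makespan comparison ($t_{OPT} \le t_{max}$, hence the greedy schedule is within $(2-\sfrac{1}{p})t_{max}$) and the absorption of the $C_d \cdot g$ term via $T \ge C$ slightly more explicit than the paper does.
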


\begin{proof}
An MR round can be split into two separate phases, the map phase and the reduce phase. Each phase involves reading input data, performing some computation and outputting any results, in order. These two phases can be simulated by two BSP supersteps $s'$ and $s''$.

\paragraph{}
The $q$ map tasks and $R_d$ reduce tasks in MapReduce are distributed amongst the $p$ processors as these become idle. Since the computation time for each task is known, then a optimal distribution of tasks amongst the processors exists which reduces the maximum processing time on each processor, also known as the makespan. Let this optimal makespan be $t_{OPT}$, then Graham \cite{Graham} states that any distribution of the tasks amongst the processors, as these become idle, will lead to a makespan of $(2 - \sfrac{1}{p}) t_{OPT}$, i.e. $t_{max} = O(t_{OPT})$. Such a distribution can be found offline, by using the same method employed by MapReduce  or using the PTAS presented \cite{Hochbaum}, and the tasks distibuted amongst the BSP processors makespan of $O(t_{max})$.

\paragraph{}
The shuffle phase in MapReduce determines how data is distributed amongst the reduce tasks in the map phase. This can be performed while simulating the $q$ map tasks. The data assigned to each reduce task is then sorted and combined by key in the reduce phase, which can be performed during the simulation of the $R_d$ reduce tasks.

\paragraph{}
If the MR round $d$ is the last round of the algorithm, then the results are stored in global memory and computation stops. When the output of a reduce phase serves as input to a new MR round, the input is stored in global memory and divided into $q$ equal parts by the master processor. Each map task is then assigned one such part. This distribution can be achieved in BSP by adding an extra superstep. Each processor sends the size of its data to a designated master processor, which determines how the data has to be evenly distributed amongst the processors and communicates this to the rest of the processors. The processors then communicate the actual data between them according to the determined distribution. The cost of this procedure is $w = O(p)$, $h = O(p)$ and $s = O(1)$. Since the computation and communication costs of the algorithm are functions of $n$, and $n \gg p$, then the total cost of the algorithm does not increase due to this procedure. \Square
\end{proof}

\paragraph{}
Looking at the simulations and the way the MapReduce framework works it is evident that, for algorithms with known computation time for the individual tasks, designing such algorithms in MapReduce does not provide any asymptotic speed-up over design in BSP. MapReduce forces the designer to cater for features such as the shuffle step, the lack of physical memory storage between rounds, the lack of specifiable association between data and map tasks in the map phase, and the intrinsic distinction between map and reduce tasks and how these process and output data. BSP does not enforce such limitations, but these can be introduced by the designer if necessary, allowing a much more flexible algorithm design.

\paragraph{}
This flexibility provided by BSP over MapReduce makes it a natural choice for algorithm design. Performing algorithm design in BSP also means that the algorithms can be implemented on several different systems, not just MapReduce. It would also be an important step towards consolidating the field of parallel algorithm design to use a single theoretical model. For tasks with unknown execution time, however, MapReduce should still be used since BSP does not have dynamic load balancing capabilities.

\section{BSP on MapReduce}
Having determined that the BSP model can simulate any MapReduce algorithm having tasks with known execution times, and that it allows for better parallel algorithm design, it is important to determine the role of MapReduce. MapReduce offers simplicity in the development process by abstracting intricate programming details. Thus, it is important to determine whether any algorithm design in BSP can be implemented in MapReduce, and if the framework's simplicity comes at a price.

\paragraph{}
Goodrich et al. \cite{Goodrich} give a simulation for any BSP algorithm on their MapReduce model. The memory size $M$ of each processor is limited to $\lceil N/P \rceil$, where $N$ is total memory size of the BSP system and $P$ is the number of BSP processors. The simulation is also correct for the MR model presented in this paper, and works as follows:
\begin{itemize}
	\item The map function is set to the identity function.
	\item Each reduce task simulates a BSP processor, with each reduce phase simulating a single superstep.
	\item Every reduce task receives the input of the BSP processor it is simulating, performs the required computation and outputs the results to the relevant reduce tasks.
	\item Any data that in BSP is stored in local memory between supersteps is stored in global memory and read in the following round.
\end{itemize}

Simulating a BSP algorithm running in $R$ supersteps with $N$ total memory has a cost in MapReduce of $O(R)$ rounds and $O(RN)$ communication. While the result of \cite{Goodrich} is very important, no distinction is made between algorithms whose asymptotic costs are preserved when implemented in MapReduce and those for which the costs do increase. 

\paragraph{}
The previously discussed differences between MapReduce and BSP lead to certain inefficiencies when implementing parallel algorithms designed in BSP on the MapReduce framework. The most evident, as also shown in the above simulation, is that the map phase is mostly ignored, with map tasks assigned only the identity function. This is due to the fact that it is not possible to pass data to specific map tasks. The only exception is the first round of computation, since data can be specifically partitioned in such a way as to have all relevant data passed to the same map task. However, in a general setting, if in BSP two processors each send data to a third processor, this cannot be simulated using map tasks. Such direct communication can be achieved in the reduce phase since the shuffle step merges all the data with the same key and sends it to a single reduce task. The reduce task's id can be used as a key to achieve direct communication. Therefore, unless the algorithm only has two supersteps, the map phase is not used for computation.

\paragraph{}
Using only the reduce phase does not increase the theoretical costs of an algorithm, and given the simplicity provided by the MapReduce framework, incurring some extra cost is allowed in practice.  However, when certain BSP algorithms are implemented on the MapReduce framework, the asymptotic costs of the algorithm do increase. When a map or reduce task is finished, the local primary memory is cleared and the new map or reduce task's input data is loaded. This allows the framework to achieve automatic load balancing and fault tolerance. However, any data that needs to be used in later rounds has to be stored in global memory, inherently increasing the communication cost of the algorithm. If this extra cost is less than the algorithm's actual communication cost, then the algorithm can be efficiently implemented in MapReduce, otherwise it should not. Also, for certain BSP algorithms the communication cost is dominated by the cost of reading the input, with the cost in the following supersteps being significantly less. BSP algorithms do not cater for this feature, and the size of data stored on local memory for use in the following supersteps is not measured in the cost model.

\paragraph{}
A simple extension to the BSP model is proposed to allow classification of BSP algorithms into those that can and cannot be efficiently implemented on the MapReduce framework. $BSPMR(p,g,l)$ is a cost model exactly the same as BSP, with the same definitions for measuring an algorithm's efficiency, except for an extra cost $f_s$. This cost $f_s$ is the maximum size of data, over all the processors, that has to be stored in local memory, in superstep $s$, for use in later supersteps. The total cost of storing local data over all supersteps is $F = \sum_{s=1}^{S} f_s$. The cost of the algorithm is still $W + H \cdot g + S \cdot l$ when implemented on a conventional parallel system, but changes to $W + (H + F) \cdot g + S \cdot l$ when implemented on MapReduce.

\paragraph{}
Let the communication cost of a BSPMR algorithm not including the cost of reading the input and writing the output be $H_{n}$.

\begin{theorem}
\label{lemmaEfficiency}
Any $\mathrm{BSPMR}$ algorithm implementable on a $\mathrm{BSP}$ computer having $p$ processors with costs $W$, $H$, $F$, $S$ can be efficiently implemented in $\mathrm{MR}(v,g,l)$ with $v = p$ using the simulation provided in \cite{Goodrich} only if $(i)$ $O(T) = W \cdot p$, $(ii)$ $O(C) = H \cdot p$, $(iii)$ $O(D) = S$, and $(iv)$ $O(F) = H_{n}$.
\end{theorem}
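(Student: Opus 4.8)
The plan is to compute the cost of the Goodrich simulation, specialised to $\mathrm{MR}(v,g,l)$ with $v=p$, term by term, and to match each MapReduce cost quantity against the corresponding $\mathrm{BSP}$/$\mathrm{BSPMR}$ quantity. First I would fix per-processor notation: let $w_s^{(i)}$, $h_s^{(i)}$ and $f_s^{(i)}$ be the computation, communication and retained local data of $\mathrm{BSP}$ processor $i$ in superstep $s$, so that $w_s=\max_i w_s^{(i)}$, $h_s=\max_i h_s^{(i)}$, $f_s=\max_i f_s^{(i)}$, and $W,H,F$ are the sums of these maxima over the $S$ supersteps. Because the simulation uses the identity map, one reduce task per $\mathrm{BSP}$ processor, and one round per superstep, we immediately get $D=S$, which is condition $(iii)$.

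Next I would read off the computation and communication. The reduce task $i$ in the round simulating superstep $s$ performs exactly the work of processor $i$, so $T_d=\sum_i w_s^{(i)}$ up to the negligible cost of the identity maps, and hence $T=\sum_s\sum_i w_s^{(i)}$. For communication, each reduce task reads its $\mathrm{BSP}$ input together with the local data retained from the previous superstep, and writes its $\mathrm{BSP}$ output together with the local data it must retain, giving $C=\sum_s\sum_i\bigl(h_s^{(i)}+f_{s-1}^{(i)}+f_s^{(i)}\bigr)$. The one inequality I really need is that a sum of $p$ per-processor terms is at most $p$ times their maximum, tight up to constants precisely when the load is balanced; this is what converts the maxima $W,H,F$ into the MapReduce totals, yielding $T=\Theta(pW)$ (condition $(i)$) and splitting $C$ into a genuine-communication part of order $pH$ and an extra-storage part of order $pF$.

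I would then take efficient implementation to mean that the MapReduce cost $T+C\cdot g+D\cdot l$ is of the same asymptotic order as what the algorithm would cost as a pure $\mathrm{BSP}$ computation scaled to totals, namely $p(W+H\cdot g)+S\cdot l$; this is the reading suggested by the earlier remark that the implementation is efficient exactly when the extra cost is no larger than the algorithm's actual communication. Matching the computation, communication and round terms forces $(i)$, $(ii)$ $O(C)=Hp$, and $(iii)$. The only remaining slack is the storage contribution of order $pF$ inside $C$: writing $H=H_{\text{in}}+H_{n}+H_{\text{out}}$ and observing that the input reading and output writing are common to both models, the extra storage $F$ competes only with the intermediate communication $H_{n}$, so $C$ stays of order $pH$ exactly when $F=O(H_{n})$, which is $(iv)$ and in turn delivers $(ii)$.

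The main obstacle I anticipate is not any single calculation but fixing the correct notion of efficiency and handling the mismatch between $\mathrm{BSP}$'s per-processor maxima and MapReduce's totals. Concretely, I must justify that $\sum_i(\cdot)\le p\max_i(\cdot)$ is tight under the stated regime $n\gg p$, so that the $p$-scaling is legitimate, and I must argue that the retained data becomes exactly the claimed extra global-memory traffic (written once, re-read once in the following superstep) and should be measured against $H_{n}$ rather than the full $H$; otherwise an algorithm whose communication is dominated by input reading would be wrongly classified as efficient. Once this correspondence is in place, conditions $(i)$–$(iv)$ emerge as precisely the requirements that each of the four cost components be preserved under the simulation.
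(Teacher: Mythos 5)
Your proposal is correct and follows essentially the same route as the paper: the paper dismisses conditions $(i)$--$(iii)$ as trivial consequences of the Goodrich simulation (one reduce task per processor, one round per superstep, totals versus per-processor maxima), and for $(iv)$ it makes exactly your observation that the input-reading and output-writing costs are common to both models, so the extra storage traffic $F$ must be weighed against $H_n$ rather than the full $H$. Your write-up simply supplies the per-processor bookkeeping and the explicit notion of efficiency that the paper leaves implicit.
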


\begin{proof}
$(i)$, $(ii)$, $(iii)$ Trivial.

$(iv)$ If the cost of reading the input on a BSP computer is $n$, the cost in MR is $O(n)$, since at most each value is given a key of constant size. The same applies to the output of the algorithm. This cost is included in $H$, but not in $H_{n}$, which only includes the cost of communicating local data between supersteps. Therefore, if $O(F) > H_{n}$, then the cost of implementing the algorithm on MapReduce is higher than implementing the algorithm on a conventional parallel system. \Square
\end{proof}

\paragraph{}
To better understand the use of MapReduce as an implementation framework for BSP algorithms a few examples are given below. These demonstrate the use of Theorem \ref{lemmaEfficiency} to determine whether MapReduce should be used to implement these algorithms.

\subsection{Sorting}
\label{sec:Sorting}
The problem of sorting a collection of comparable elements, generally taken to be numbers, has been extensively researched in the parallel algorithms domain. In BSP, two main algorithms exist. The first one, by Shi and Schaeffer \cite{ShiSchaeffer}, is known as Parallel Sorting by Regular Sampling (PSRS) and is asymptotically optimal for $n \geq p^3$. For $n < p^3$, Goodrich \cite{GoodrichSorting} proposed a different algorithm, the MapReduce version of which is given in \cite{Goodrich}.

\paragraph{}
The algorithm in \cite{ShiSchaeffer} is more suited for today's data sizes, and is very straightforward. It can be split up into two parts, sampling and sorting, and proceeds as follows. Given a collection of $n$ items, this is partitioned across the $p$ processors, each receiving about $\sfrac{n}{p}$ elements. Every processor sorts its data using an optimal sequential algorithm, and selects $p + 1$ regularly spaced primary samples, including the first and last elements, from this sorted data. A single processor then receives all $p \cdot (p + 1)$ primary samples, sorts them and again chooses $p + 1$ regular secondary samples from them. These secondary samples divide the $n$ input elements into $p$ buckets. Each processor is then assigned a bucket and every processor distributes its input data between the processors, based on their assigned bucket. Upon receiving all its data, each processor sequentially sorts it and outputs it. The algorithm has cost $W = O(\sfrac{(n \log n)}{p})$, $H = O(\sfrac{n}{p})$, $S = O(1)$. The extra cost measured in BSPMR is $F = O(\sfrac{n}{p})$, and $H_{n} = O(\sfrac{n}{p})$ so the algorithm can be efficiently implemented in MapReduce. 

\paragraph{}
In MapReduce the algorithm is just as straightforward and is again split into two parts. The algorithm can be implemented using the simulation described above, with each reduce task simulating a single processor in the BSP system. However, advantage can be taken of the MapReduce framework to allow for better implementation. Since input to MapReduce is in the form of $\langle key;value \rangle$ pairs, the input values are divided into $q$ partitions each having about $\sfrac{n}{q}$ elements, with each partition having a unique key from $1$ to $q$. The number of map tasks $q$ and reduce tasks $r$ should be set to the largest possible values allowed by the system, so we assume that $q \geq r$ as stated in \cite{MapReduceOrig} and \cite{HadoopTutorial}.

\paragraph{}
Every one of the $q$ available map tasks reads its input data, sorts the values and selects $r + 1$ regularly spaced primary samples. All the primary samples are sent to a single reduce task by setting the key to a common value, and are sorted by the shuffle function. The $r + 1$ regularly spaced secondary samples are then chosen, which determine the buckets to be used in the sorting phase. In the sorting phase, the input is again split amongst the available map tasks which produce for each input element $x_i$ a $\langle key;value \rangle$ pair with $key = value = x_i$. The shuffle function is changed so that every reduce task is assigned a bucket and every value is sent to the reduce task responsible for the bucket it falls into. All the values in each bucket are combined and automatically sorted by key, by the shuffle function, and the reduce function simply outputs the values in this sorted order. The cost of the algorithm in MR is $T = O(n \log n)$, $C = O(n)$, and $D = O(1)$.

\paragraph{}
Use of the MapReduce framework for sorting was first discussed in the original paper \cite{MapReduceOrig}. Hadoop has been used in sorting benchmarking\footnote{\url{http://sortbenchmark.org/}} \cite{TeraSort2008,TeraSort2009,TritonSort} with very good results, when sorting 1 Terabyte and 1 Petabyte of data. Google have also independently achieved very good results sorting large data \cite{GoogleSort}. The algorithms used for such sorting benchmarks are very similar to the one described above, varying mostly in the sampling phase, since deterministic sampling is replaced by random sampling.  This reduces the actual computation time, since no sorting is performed in the map phase of the sampling phase.

\subsection{Dense Matrix Multiplication}
Given $n \times n$ matrices $A$, $B$, the aim is to compute the $n \times n$ matrix $C = A \cdot B$. The problem is defined as
\[\begin{array}{ccc}
c_{ik} = \sum_{j=1}^{n} a_{ij} \cdot b_{jk}&  & 1 \leq i,k \leq n    
\end{array}\]
The standard sequential algorithm requires $\Theta(n^3)$ elementary operations.

\paragraph{}
A BSP algorithm for standard matrix multiplication is presented in \cite{McColl}, due to McColl and Valiant. The algorithm revolves around partitioning the input and output matrices across the $p$ available processors. The $n^3$ elementary products can be represented by a cube $V$ on axes $i, j, k$. Matrix $C$ can then be calculated by adding the elementary products in groups along the $j$-axis. The same operations can be performed in blocks, facilitating the distribution of all the elementary operations over the available processors.

\paragraph{}
Matrices $A,$ $B$ and $C$ are each divided into $p^{\sfrac{2}{3}}$ regular square blocks of size $\sfrac{n}{p\sfrac{1}{3}}$, with the blocks denoted by $A[i,j], B[j,k]$ and $C[i,k]$ for $1 \leq i, j, k \leq p^{\sfrac{1}{3}}$. Every processor is responsible for computing $V[i,j,k] = A[i,j] \cdot B[j,k]$, for a particular $i, j, k$, after receiving blocks $A[i,j]$ and $B[j,k]$ as input. The matrix $C$ is then divided between the $p$ processors and each of the $\sfrac{n^2}{p\sfrac{2}{3}}$ elements of the resulting $V[i,j,k]$ blocks is sent to the processor responsible for that summing up these elements to obtain $C[i,k]$. The cost of the algorithm is asymptotically optimal, with $W = O(\sfrac{n^3}{p})$, $H = O(\sfrac{n^2}{p^{\sfrac{2}{3}}})$ and $S = O(1)$. In BSPMR, $F = O(\sfrac{n^2}{p})$, with $H_{n} = O(\sfrac{n^2}{p^{\sfrac{2}{3}}})$ meaning the algorithm can be implemented efficiently on MapReduce.

\paragraph{}
Using the simulation presented above, the algorithm can be implemented in two rounds using $r = p$ reduce tasks to simulate the $p$ BSP processors. However, since the algorithm only runs in two supersteps, it can be implemented in MapReduce in a single round, using the whole framework, with $q = r = p$. Matrices $A, B, C$ are split into $q^{\sfrac{2}{3}}$ regular square blocks and with $A, B$ distributed amongst the $q$ available map tasks. Each map task is assigned the respective submatrices of $A$ and $B$, along with the indices of these submatrices. These indices are used as \textit{keys} to determine which of the $r$ reduce tasks will be sent the computed matrix operations. Each reduce task, upon receiving the elements, sums them up and outputs the respective submatrix of $C$. The number of map and reduce tasks are equal, with the algorithm's cost in MR being $W = O(n^3)$, $C = O(n^2 \cdot q^{\sfrac{1}{3}})$, and $D = O(1)$.

\paragraph{}
Matrix multiplication has been studied in the MapReduce context. As part of the Apache Software Foundation's efforts on Hadoop, HAMA was developed as a framework for massive matrix and graph computations using MapReduce and BSP, as detailed in \cite{Seo}.

\subsection{Breadth First Search}
The Breadth First Search (BFS) algorithm provides a simple way of searching for vertices in a graph $G = (V, E)$. The result is a tree, rooted at a vertex $s$, to all the vertices reachable from $s$. Each vertex's distance from $s$ is also noted. The typical approach taken for the parallelisation of the BFS algorithm \cite{Xia,Scarpazza,Yoo} is a level synchronous one, in line with the BSP model. The vertices of the graph are partitioned across the processors and then acted upon, with each processor also assigned the edges incident to each assigned vertex. A processor is only allowed to process the vertices it owns. Each vertex is marked as either processed or not processed. Initially, all the vertices are marked as not processed.

\paragraph{}
Given $G = (V, E)$, where $|V| \gg p$, the parallel BFS algorithm proceeds as follows. In the first superstep, the processor responsible for the root vertex identifies its neighbouring vertices and marks the root vertex as processed. Typically, most of the identified neighbouring vertices are not owned by the processor. Therefore, the processors responsible for each neighbouring vertex are identified and notified that these vertices are currently at the frontier of the search and need to be processed. In the following superstep, the next level in the BFS tree is processed. Each vertex reads the list of vertices that need to be processed in the current superstep. If any of these vertices is marked as not processed, their status is set to processed and their neighbours are identified. The processor responsible for each vertex is identified and notified and a new superstep can start. This process is repeated until all vertices have been processed. The cost of the algorithm is $W = O(\sfrac{|V|^2}{p})$, $H = O(\sfrac{|V|^2}{p})$, $S = O(d)$, where $d$ is the diameter of the graph $G$. The cost of not having storage, as measured in BSPMR, is $F = O(\sfrac{|V|^2}{p})$, while $H_{n} = O(|V|)$.

\paragraph{}
Since $F > H_{n}$, the algorithm cannot be efficiently implemented on MapReduce. When the primary local memory is cleared, the whole graph structure has to be stored and read from global memory in each superstep. Real life graphs exhibit the power law property \cite{Chakrabarti}. Therefore, the number of edges stored per processor varies depending on the assigned vertices. This makes load balancing very hard to achieve. For a graph with a fixed maximum degree $k$, $F = O(\sfrac{k|V|}{p})$, which could lead to an efficient MapReduce implementation for small values of $k$.

\paragraph{}
The inefficiency of implementing some graph algorithms on MapReduce is recognised in \cite{Seo}, which presents a framework that works in conjunction with Hadoop to perform matrix and graph computations. The authors claim that MapReduce is not appropriate for graph traversal algorithms, stating that algorithms based on their BSP engine would be better suited for such tasks.

\section{Conclusion}
The MapReduce framework has generated great interest in the area of parallel algorithm design due to the simplicity with which parallel algorithms can be developed. This wide use of the framework demands that it is put on sound theoretical foundations, which is essential if the potential and limitations of the framework are to be fully understood.

\paragraph{}
A number of theoretical models for the framework have already been proposed, but each poses limitations on the framework to enforce parallel algorithm design. These are not necessary for efficient design, as shown by previously existing models such as BSP. This work shows that the BSP model can in fact be used to model MapReduce algorithms, with the same asymptotic cost for conventional systems with identical number of processors. It is, therefore, determined that MapReduce should be used purely as an implementation framework with BSP used to design the algorithms.

\paragraph{}
This work also discusses the differences between MapReduce and BSP, highlighting the fact that BSP algorithms which require data to be stored and accessed in multiple supersteps cannot be efficiently implemented using MapReduce if the number of supersteps is not constant. This is due to the nature of the framework which does not allow for storage between rounds but requires that this data is communicated between such rounds. However, since the MapReduce framework provides a simple abstraction on top of the intricate parallel development challenges, it may still be deemed more feasible in practice to use the framework despite its inefficiencies.

\bibliography{bibliography} 
\bibliographystyle{plain}

\end{document}